\newtheorem{definition}{Definition}
\newtheorem{theorem}{Theorem}
\newtheorem{observation}{Observation}
\newtheorem{proposition}{Proposition}
\newtheorem{lemma}{Lemma}
\newtheorem{corollary}{Corollary}
\newcommand*{\ket}[1]{\ensuremath{|#1\rangle}} %
\newcommand*{\bra}[1]{\ensuremath{\langle#1|}} %
\newcommand*{\op}[2]{\ensuremath{|#1\rangle
    \langle#2|}} %
\newcommand*{\ip}[2]{\ensuremath{\langle#1|#2\rangle}} %
\DeclareMathOperator{\tr}{tr} %
\DeclareMathOperator{\Tr}{Tr} %
\def\H{\mathcal{H}} %
\def\C{\mathbb{C}}
\def\I{\mathbb{I}}
\def\T{\mathbb{T}}
\def\L{\mathcal{L}}
\begin{document}
\title{Quantum State Tomography for Generic Pure States}

\author{Shilin Huang}
\affiliation{Institute for Interdisciplinary Information Sciences, Tsinghua University, Beijing, China}

\author{Jianxin Chen}	
\affiliation{Aliyun Quantum Laboratory, Hangzhou, China}

\author{Youning Li}
\affiliation{Department of Physics and Collaborative Innovation Center of Quantum Matter,
Tsinghua University, Beijing, China}

\author{Bei Zeng}
\affiliation{Department of Mathematics \& Statistics, University of Guelph, Guelph, Ontario, Canada}
\affiliation{Institute for Quantum Computing, University of Waterloo, Waterloo, Ontario, Canada}

\begin{abstract}
We examine the problem of whether a multipartite pure quantum state can be uniquely determined by its reduced density matrices.
We show that a generic pure state in three party Hilbert space $\H_A \otimes \H_B \otimes \H_C$, where $\mathrm{dim}(\H_A) = 2$ and $\mathrm{dim}(\H_B) = \mathrm{dim}(\H_C)$, can be uniquely determined by its reduced states on subsystems $\H_A \otimes \H_B$ and $\H_A \otimes \H_C$. Then we generalize the conclusion to the case that $\mathrm{dim}(\H_1) > 2$.
As a corollary, we show that a generic $N$-qudit pure quantum state is uniquely determined by only $2$ of its $\lceil\frac{N+1}{2}\rceil$-particle reduced density matrices. Furthermore, our results do indicate a method to uniquely determine a generic $N$-qudit pure state of dimension $D=d^N$ with only $O(D)$ local measurements, which is an improvement comparing to the previous known approach using $O(D\log^2 D)$ or $O(D\log D)$ local measurements.
\end{abstract}

\maketitle
\renewcommand\theequation{\arabic{section}.\arabic{equation}}
%\numberwithin{equation}{section}
\setcounter{tocdepth}{4}
%\tableofcontents
\makeatletter
\@addtoreset{equation}{section}
\makeatother
\section{Introduction}

The task of reconstructing the state of a quantum system by physical measurements,
known as quantum state tomography (QST) \cite{Paris2004J},
is of great importance in quantum information science, such as validating quantum devices and benchmarking \cite{lvovsky2009continuous,haffner2006scalable,baur2012benchmarking,d2002quantum,leibfried2005creation}.
However, for an $N$-qudit system, the dimension of the state space and, indeed, the number of local measurements for QST, grow exponentially
as $N$ increases.
This is a fundamental difficulty in performing QST on many-body system. In the past decade, tremendous effort has been devoted to boost the efficiency of QST \cite{vanner2012cooling,flammia2012quantum,gross2010quantum,lu2016tomography,cramer2010efficient,baldwin2016strictly,kosaka2009spin,wang2016quantum} under different situations.

%However, when the physical system involves many particles, the state space grows exponentially with system size, and hence we need exponentially growing local measurements for QST.

Naive full tomography, as literally, need $D^{2}$ measurements, where $D=d^N$ is the dimension of the $N$-qudit system ($d$ is the dimension of the qudit).
For pure states or density operators with low rank, reduced density matrices (RDMs) may contain information to characterize the global system\cite{Chen2017Physical,Chen2017Joint}, thus a more promising approach for QST can be implemented as follows: acquiring a set of RDMs by local measurements, then applying post-processing algorithm to reconstruct the state \cite{linden2002almost,linden2002parts,diosi2004three,jones2005parts,chen2012comment,chen2012ground,chen2013uniqueness}. Usually, we have no knowledge about the system before measurements. Therefore, to adopt this approach, one important criterion must be satisfied, that is, given results of chosen local measurements,
the original state should be Uniquely Determined among All states (UDA).
It was first proved in \cite{linden2002almost} that a generic three-qubit pure state is UDA by its two-body RDMs. \cite{jones2005parts} provided a more general result, showing that a generic $N$-qudit pure quantum state is UDA by its $\left( \lceil \frac{N}{2} \rceil + 1\right)$-body RDMs. \cite{jones2005parts} also proved that $\lfloor \frac{N}{2} \rfloor$-body RDMs are not sufficient for uniquely determining a generic $N$-qudit pure state.
However, a gap appears when $N$ is odd: whether $\frac{N+1}{2}$-qudit RDMs are sufficient to uniquely determine a generic pure state or not remains open.
\cite{chen2013uniqueness} proved that a generic $3$-qudit pure state can be uniquely determined by its $2$-qudit RDMs. Hence it is natural to conjecture that for every odd $N \ge 3$, $\frac{N+1}{2}$-qudit RDMs are sufficient to uniquely determine a generic pure state. In this paper, the conjecture is proved correct and thus the gap is filled.

Furthermore, we hope the set of local measurements to be as small as possible. \cite{gross2010quantum} gave a protocol to reconstruct low rank $r$ density matrices of $N$-qubit system using $O(rD\log^2D)$ Pauli measurements ($D = 2^N$). In particular, when the original state is pure, i.e., $r = 1$, $O(D\log^2D)$ Pauli measurements are needed.
\cite{jones2005parts} provided an approach using $\lfloor \frac{N}{2} \rfloor$ $\left( \lceil \frac{N}{2} \rceil + 1\right)$-qudit RDMs and hence
$O(D \log D)$ ($D = d^N$) local measurements to uniquely determine a generic $N$-qudit pure state. However, a simple parameter counting argument suggests that $O(D)$ measurements could be possibly sufficient for uniquely determining an $N$-qudit pure state. In this paper, we provide a valid scheme to uniquely determine a generic $N$-qudit pure state, using only $O(D)$ local measurements.

This paper is organized as follows. In the beginning, we consider a three body system $\H_A \otimes \H_B \otimes \H_C$, where $\textrm{dim}\left(\H_B\right) = \textrm{dim}\left(\H_C\right)$. We show in section \ref{sec2} that when $\textrm{dim}\left( \H_A \right) = 2$, a generic pure state is equivalent to a `triangular state' under local unitary transformations. In section \ref{sec4}, we prove that almost all triangular states can be uniquely determined by its
RDMs on subsystems $\H_A \otimes \H_B$ and $\H_A \otimes \H_C$. The remaining proof is in section \ref{sec5}, which complete the proof by claiming that, the set of states that can not be transformed into a triangular form, or those not UDA via its RDMs on subsystems $\H_A \otimes \H_B$ and $\H_A \otimes \H_C$, has measure zero.

\section{Triangular Form for states in $\C^2 \otimes \C^d \otimes \C^d$} \label{sec2}
%Our main observation is that unique determinability is invariant under local unitary transformation. To see this, we have the following lemma:

Before showing that a generic state is UDA by its RDMs, we have a crucial observation that unique determinability is invariant under local unitary (LU) transformations. Thus we can focus on those states in canonical form under LU transformations. To be more precise, we introduce several notation and concepts, which will be used in later discussion:

\begin{definition}
Let $\ket{\psi}, \ket{\psi'} \in \C^{d_A} \otimes \C^{d_B} \otimes \C^{d_C}$ be two state vectors.
Say $\ket{\psi}$ and $\ket{\psi'}$ are LU-equivalent, if and only if there exist unitary operators $U_1 \in U(d_A)$, $U_2 \in U(d_B)$, $U_3 \in U(d_C)$
such that
\begin{equation}
\ket{\psi'} = \left(U_1 \otimes U_2 \otimes U_3\right) \ket{\psi}.
\end{equation}
\end{definition}

\begin{definition}
Let $\rho, \rho' \in \L\left( \C^{d_A} \otimes \C^{d_B} \otimes \C^{d_C} \right)$ be two self-adjoint operators.
Say $\rho$ and $\rho'$ are equivalent (denoted by $\rho \simeq \rho'$), if and only if their reduced density matrices (RDM) satisfy
\begin{equation}
\Tr_{B} (\rho) = \Tr_B(\rho'), \quad \Tr_{C}(\rho) = \Tr_{C}(\rho').
\end{equation}
\end{definition}

\begin{definition}
Let $\ket{\psi} \in \C^{d_A} \otimes \C^{d_B} \otimes \C^{d_C}$ be a vector.
Say $\ket{\psi}$ is UDA, if and only if for every positive semidefinite operator $\rho \in \L\left( \C^{d_A} \otimes \C^{d_B} \otimes \C^{d_C} \right)$, $\rho \simeq \op{\psi}{\psi}$ implies $\rho = \op{\psi}{\psi}$.
\end{definition}

To see why our observation holds, we have the following lemma:
\begin{lemma}\label{lma1}
Let $\ket{\psi} \in \C^{d_A} \otimes \C^{d_B} \otimes \C^{d_C}$ be a UDA state, $U_A \in U(d_A)$, $U_B \in U(d_B)$ and $U_C \in U(d_C)$ be three unitary transformations. Then the state $\ket{\psi'} = \left(U_A \otimes U_B \otimes U_C\right) \ket{\psi}$ is also UDA.
\end{lemma}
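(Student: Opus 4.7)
The plan is to show that the UDA property transports across local unitaries by pulling any candidate alternative $\rho'$ back through the conjugation by $U = U_A \otimes U_B \otimes U_C$. Concretely, suppose $\rho'$ is a positive semidefinite operator with $\rho' \simeq \op{\psi'}{\psi'}$; I must show $\rho' = \op{\psi'}{\psi'}$. The natural candidate to feed into the hypothesis that $\ket{\psi}$ is UDA is $\tilde\rho \defeq U^\dagger \rho' U$, which is automatically positive semidefinite, and which satisfies $U^\dagger \ket{\psi'} = \ket{\psi}$ so that $U^\dagger \op{\psi'}{\psi'} U = \op{\psi}{\psi}$.

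The key technical step will be to verify that $\tilde\rho \simeq \op{\psi}{\psi}$. For this I will use the partial-trace identity
\begin{equation}
\Tr_B\!\bigl( (U_A \otimes U_B \otimes U_C)^\dagger X (U_A \otimes U_B \otimes U_C) \bigr) = (U_A \otimes U_C)^\dagger \Tr_B(X) (U_A \otimes U_C),
\end{equation}
which follows by pulling $U_A^\dagger$ and $U_C^\dagger$ outside the partial trace over $\H_B$ and then invoking the cyclic invariance of the trace on $\H_B$ to cancel $U_B^\dagger U_B$. Applying this identity with $X = \rho'$ and with $X = \op{\psi'}{\psi'}$ and using the hypothesis $\Tr_B(\rho') = \Tr_B(\op{\psi'}{\psi'})$ yields $\Tr_B(\tilde\rho) = \Tr_B(\op{\psi}{\psi})$; the symmetric argument for the trace over $\H_C$ gives $\Tr_C(\tilde\rho) = \Tr_C(\op{\psi}{\psi})$.

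At that point, since $\ket{\psi}$ is UDA and $\tilde\rho$ is a positive semidefinite operator equivalent to $\op{\psi}{\psi}$ in the sense of Definition 2, the UDA hypothesis forces $\tilde\rho = \op{\psi}{\psi}$. Conjugating back by $U$ then gives $\rho' = U \tilde\rho U^\dagger = U \op{\psi}{\psi} U^\dagger = \op{\psi'}{\psi'}$, which is exactly what we wanted. There is no substantive obstacle to overcome here: the proof is essentially a bookkeeping exercise, and the only point requiring care is the partial-trace identity displayed above, to make sure the local unitary acting on the traced-out subsystem genuinely drops out.
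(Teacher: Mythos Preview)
Your proof is correct and follows essentially the same route as the paper: conjugate the candidate $\rho'$ by $U^\dagger$, use the partial-trace identity to conclude the pulled-back operator has the same $AB$ and $AC$ marginals as $\op{\psi}{\psi}$, and then invoke the UDA hypothesis on $\ket{\psi}$. The only cosmetic difference is that the paper phrases it as a proof by contradiction while you argue directly.
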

\begin{proof}
Suppose there exists another $\rho' \ne \op{\psi'}{\psi'}$ such that $\rho' \simeq \op{\psi'}{\psi'}$. Let $\rho = \Large(U_A^{\dagger} \otimes U_B^{\dagger} \otimes U_C^{\dagger}\Large)\rho'\Large(U_A \otimes U_B \otimes U_C\Large)$. We have
$\Tr_{C} (\rho) = \left(U_A^{\dagger} \otimes U_B^{\dagger}\right) \Tr_{C}(\rho') \left(U_A \otimes U_B\right)
=\Large(U_A^{\dagger} \otimes U_B^{\dagger}\Large) \Tr_{C}(\op{\psi'}{\psi'}) \Large(U_A \otimes U_B\Large) = \Tr_{C}(\op{\psi}{\psi}).$
Similarly, we have $Tr_{B}(\rho) = \Tr_{B}\left(\op{\psi}{\psi}\right)$, which contradicts to the assumption that $\ket{\psi}$ is UDA.
\end{proof}

It is well-known that for bipartite quantum system, the canonical form is the so-called Schmidt decomposition \cite{schmidt1989theorie}. For multipartite systems, generalized versions of Schmidt decomposition and classifications of multipartite entanglement have been deeply studied from various aspects \cite{acin2000generalized,carteret2000multipartite}. In this paper, we mainly focus on the case of $\C^2\otimes \C^d\otimes \C^d$ system. We will adopt the lemma used in \cite{chen2017} to study the generalized Schmidt decomposition of $\C^2\otimes \C^d\otimes \C^d$ quantum system. For sake of the readability, we include the proof for self-contained reading.

\begin{lemma}[Lemma~5, \cite{chen2017}]\label{CanonicalForm}
A generic 3-body state $\ket{\psi} \in \C^2 \otimes \C^d \otimes \C^d$, is LU-equivalent to a state
\begin{equation}
\ket{\phi} = \sum_{i=1}^2\sum_{j=1}^d\sum_{k=1}^d \phi_{ijk} \ket{i^A}\ket{j^B} \ket{k^C}
\end{equation}
such that
\begin{enumerate}
\item $\phi_{ijk} = 0$ when $j > k$,
\item $\phi_{1ij}$ is real when $j \le i+1$.
\end{enumerate}
We call $\ket{\phi}$ the triangular form of $\ket{\psi}$.
\end{lemma}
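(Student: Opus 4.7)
My plan is to encode $\ket{\psi}$ as a pair of $d\times d$ matrices and reduce the lemma to simultaneous triangularization via the generalized Schur (QZ) decomposition, finishing with a diagonal phase adjustment.

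Writing $\ket{\psi}=\sum_{i=1}^{2}\ket{i^A}\otimes \ket{\psi_i}_{BC}$ and identifying each $\ket{\psi_i}_{BC}$ with the matrix $X_i\in\C^{d\times d}$ given by $(X_i)_{jk}=\phi_{ijk}$, a direct calculation shows that the LU action amounts to $X_i\mapsto U_B X_i U_C^T$ together with the $U_A$-mixing $(X_1,X_2)\mapsto ((U_A)_{11}X_1+(U_A)_{12}X_2,\,(U_A)_{21}X_1+(U_A)_{22}X_2)$. In these coordinates condition~(1) becomes ``$X_1$ and $X_2$ are both upper triangular'', and condition~(2) becomes ``the diagonal and first superdiagonal of $X_1$ are real''. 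For the first, I would invoke the QZ decomposition, which asserts that for any $A,B\in\C^{d\times d}$ there exist unitaries $Q,Z\in U(d)$ making both $QAZ$ and $QBZ$ upper triangular; applying this with $U_B=Q$, $U_C=Z^T$, and $U_A=I$ settles condition~(1) at once, without using the $U_A$ freedom at all.

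For condition~(2) I would exploit the residual freedom that preserves upper triangularity under $X\mapsto U_BXU_C^T$. A quick check shows this residual group consists of pairs of diagonal unitaries $(D_B,D_C)=(\mathrm{diag}(e^{i\alpha_j}),\mathrm{diag}(e^{i\beta_k}))$, under which $(X_1)_{jk}\mapsto e^{i(\alpha_j+\beta_k)}(X_1)_{jk}$. Fixing $\beta_1=0$ and solving recursively --- choose $\alpha_1$ from the reality of $(X_1)_{11}$, then $\beta_2$ from $(X_1)_{12}$, then $\alpha_2$ from $(X_1)_{22}$, then $\beta_3$ from $(X_1)_{23}$, and so on --- cascades through all the diagonal and first-superdiagonal entries of $X_1$ and satisfies the $2d-1$ reality constraints with the $2d$ available phases.

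The only substantial conceptual hurdle is the QZ decomposition itself, a classical result in numerical linear algebra that may be unfamiliar in this setting; a precise statement with a reference will do. Genericity enters only in the last step: the recursive phase adjustment requires $(X_1)_{jj}\ne 0$ and $(X_1)_{j,j+1}\ne 0$ after the QZ step so that the corresponding phases are well defined, and these Zariski-open conditions fail only on a measure-zero subset of $\C^2\otimes\C^d\otimes\C^d$.
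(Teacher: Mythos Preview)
Your proposal is correct and tracks the paper's argument closely: both encode $\ket{\psi}$ as a matrix pair, simultaneously upper-triangularize via unitaries on the $B$ and $C$ factors, and then impose the reality constraints with diagonal phases. The only substantive difference is that the paper proves the triangularization step from scratch by an inductive Schur-type deflation (its Proposition~\ref{triangular}), carried out under the generic hypothesis that $\det(X_1-xX_2)$ has degree $d$; this is precisely the regular-pencil case of the QZ decomposition you invoke as a black box. One minor correction to your last paragraph: your route does not actually need genericity anywhere, since the complex QZ decomposition holds for \emph{arbitrary} matrix pairs, and if a diagonal or superdiagonal entry of $X_1$ happens to vanish it is already real and the corresponding phase may be chosen freely, so the cascade never breaks down. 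In the paper's version genericity enters instead through the degree-$d$ hypothesis on the pencil, which the inductive deflation requires to guarantee a nonzero $B\ket{v}$ at each step.
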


Consider the following linear isomorphism $\varphi: \C^2 \otimes \C^d \otimes \C^d \rightarrow \L(\C^d) \times \L(\C^d)$, such that
\begin{eqnarray}
&& \varphi\Big(\ket{1^A} \ket{i^B}  \ket{j^C}\Big) = \Big( \op{i}{j}, 0 \Big),\notag\\
&& \varphi\Big( \ket{2^A}  \ket{i^B}  \ket{j^C} \Big) = \Big( 0, \op{i}{j} \Big).
\end{eqnarray}
For a state $\ket{\psi} \in \C^2 \otimes \C^d \otimes \C^d$ and two unitary operator $U, V \in U(d)$ acting on the second and third subsystems respectively.
Assuming $\varphi(\ket{\psi}) = (A, B)$, we have
\begin{equation}
\varphi\Big( \left(\I_A \otimes U \otimes V\right) \ket{\psi} \Big) = \Big( UAV^{\dagger}, UBV^{\dagger} \Big).
\end{equation}
If $UAV^{\dagger}$ and $UBV^{\dagger}$ are both upper triangular matrices, $\I_A \otimes U \otimes V \ket{\psi}$ would be in the triangular form of $\ket{\psi}$. The following proposition gives a method on how to find such $U$ and $V$, and it implies Lemma \ref{CanonicalForm} directly.
%It is sufficient to prove Theorem \ref{CanonicalForm} if we can show that for two generic linear operators $A, B \in \mathrm{End}(\C^d)$,
%there exists $U,V \in U(d)$ such that both $UAV^{$ and $UBV$ are upper triangular matrices.

\begin{proposition}\label{triangular}
Let $A, B \in \L(\C^d)$ be $2$ linear operators, if the polynomial $\mathrm{det}(A-xB)$ has degree $d$, then there exists $U,V \in U(d)$ such that (i) both $UAV^{\dagger}$ and $UBV^{\dagger}$ are upper triangular matrices. (ii) $\bra{i}UAV^{\dagger}\ket{j}$ is real when $j \le i+1$.

\end{proposition}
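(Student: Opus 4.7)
The plan is to handle the two parts separately: for (i), simultaneously triangularize $A$ and $B$ (a version of the generalized Schur / QZ decomposition), and for (ii), fine-tune phases using the residual diagonal gauge freedom in the choice of $U$ and $V$. Throughout, the degree hypothesis amounts to $B$ being invertible: expanding
\begin{equation*}
\det(A - xB) = (-1)^d \det(B)\, x^d + \cdots,
\end{equation*}
the polynomial has degree exactly $d$ if and only if $\det B \neq 0$.

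For part (i), I set $M := A B^{-1}$ and apply the ordinary Schur decomposition to obtain $U \in U(d)$ with $T_M := U M U^\dagger$ upper triangular. The matrix $UB$ is still invertible, so an RQ factorization---which exists for any square matrix, for example by running Gram--Schmidt on the rows of $UB$ in reverse order---produces $V \in U(d)$ and an upper triangular $T_B$ with $UB = T_B V$, equivalently $UBV^\dagger = T_B$. Then
\begin{equation*}
UAV^\dagger = (UAB^{-1}U^\dagger)(UBV^\dagger) = T_M T_B,
\end{equation*}
a product of two upper triangular matrices and hence upper triangular. This settles (i).

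For part (ii), I observe that the pair $(U, V)$ above is only determined up to the substitution $U \mapsto D_1 U$, $V \mapsto D_2 V$ with $D_1, D_2 \in U(d)$ diagonal unitaries; left/right multiplication by diagonal unitaries preserves upper triangularity of both factors. Writing $(D_1)_{kk} = e^{\mathrm{i}\alpha_k}$ and $(D_2)_{\ell\ell} = e^{\mathrm{i}\beta_\ell}$, the $(k,\ell)$-entry of $\tilde A := UAV^\dagger$ acquires the overall phase $e^{\mathrm{i}(\alpha_k - \beta_\ell)}$. Since entries with $\ell < k$ vanish and are therefore already real, the nontrivial reality conditions reduce to the $d$ diagonal and $d-1$ first-superdiagonal constraints
\begin{equation*}
\alpha_k - \beta_k \equiv -\arg \tilde A_{kk}, \qquad \alpha_k - \beta_{k+1} \equiv -\arg \tilde A_{k,k+1} \pmod{\pi},
\end{equation*}
where any equation whose right-hand side involves a vanishing entry is satisfied trivially. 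These $2d - 1$ equations in the $2d$ real unknowns form a chain that can be solved sequentially: pick $\alpha_1$ freely, then read off $\beta_1$, $\beta_2$, $\alpha_2$, $\beta_3$, $\alpha_3$, and so on up to $\alpha_d$. This produces the required $D_1, D_2$ and completes (ii).

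The main conceptual step is part (i), which is essentially the generalized Schur (QZ) decomposition; the hypothesis that $\det(A - xB)$ has degree $d$ is precisely what allows the compact ``Schur on $AB^{-1}$, then RQ on $UB$'' proof. Part (ii) is then a clean parameter-counting argument, in which the index set $\{(i,j) : j \le i + 1\}$ is calibrated exactly so that the $2d - 1$ reality conditions fit strictly inside the $2d$-dimensional diagonal gauge group, leaving a single residual phase.
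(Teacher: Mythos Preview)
Your proof is correct, but it takes a genuinely different route from the paper's. The paper proceeds by induction on $d$: at each step it picks a root $x_1$ of $\det(A-xB)$, a vector $\ket{v}\in\ker(A-x_1 B)$, shows $B\ket{v}\neq 0$ using the degree hypothesis, and then uses unitaries sending $\ket{v}\mapsto\ket{1}$ and $B\ket{v}$ into $\mathrm{span}\{\ket{1}\}$ to peel off the first row and column simultaneously; the phase fix for entries $(1,1)$ and $(1,2)$ is woven into the inductive step. Your argument instead observes that the degree hypothesis is exactly $\det B\neq 0$, reduces (i) to the ordinary Schur decomposition of $AB^{-1}$ followed by an RQ factorization of $UB$, and then handles (ii) globally by exploiting the residual $U(1)^d\times U(1)^d$ diagonal gauge. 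Your approach is shorter and cleanly separates the triangularization from the phase alignment, at the cost of invoking Schur and RQ as black boxes; the paper's inductive argument is self-contained and makes the role of the generalized eigenvalue problem more explicit, which could in principle be pushed to pencils where $B$ is not invertible. Either way, the parameter count you give---$2d-1$ reality constraints against a $2d$-dimensional diagonal gauge---is exactly the mechanism that makes the choice of index set $\{j\le i+1\}$ natural.
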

\begin{proof}
When $d = 1$, the statement is trivial. Suppose the statement holds when $d = n-1$.
Let $A,B \in \L(\C^n)$ be two linear operators such that the polynomial $f(x) =\mathrm{det}\left(A - xB\right)$ has degree $n$.
Let $x_1 \in \C$ be one of the root of $f(x)=0$. Since $\mathrm{det}\left(A-x_1B\right) = 0$, $\mathrm{ker}\left(A-x_1B\right)$ is not empty.

Let $\ket{v} \in \mathrm{ker}\left(A-x_1B\right)$ be a normalized vector and denote $\ket{w} = B\ket{v}$. We immediately have $\ket{w} \ne 0$, otherwise, for every $x \in \C$, we have
$\left(A - xB\right) \ket{v} = (x_1-x)B \ket{v} = (x_1 - x) \ket{w} = 0$,
which implies that $f(x) \equiv 0$, contradiction.

Let $V_1, W_1 \in U(n)$ be arbitrary unitary operators such that
$V_1\ket{v} = \ket{1}$ and $W_1 \ket{w} \in \mathrm{span}\left( \left\{\ket{1} \right\} \right)$
Then we have
\begin{eqnarray*}
W_1 A V_1^{\dagger} \ket{1} = \alpha x_1 \ket{1},\quad  W_1 B V_1^{\dagger} \ket{1} = \alpha \ket{1}
\end{eqnarray*}
for some $\alpha \in \C$.
Define $P = \sum_{i=2}^n \op{i}{i}$, then we can define $2$ operators $A', B' \in \L(\C^{n-1})$, such that
\begin{eqnarray*}
A' = P W_1 A V_1^{\dagger} P,\quad B' = P W_1 B V_1^{\dagger} P.
\end{eqnarray*}
It is easily to check that
\begin{equation}
\mathrm{det}\left[W_1(A-xB)V_1^{\dagger}\right] = \alpha(x_1-x)\mathrm{det}\left(A' - xB'\right).
\end{equation}
Thus, the degree of the polynomial $\mathrm{det}\left(A'-xB'\right)$ must be $n-1$.

By our assumption, there must exist $U_2, V_2 \in U(n-1)$ such that both $U_2A'V_2^{\dagger}$ and $U_2B'V_2^{\dagger}$ are upper triangular matrices.
Suppose
\begin{eqnarray*}
\bra{1} U_1 A V_1^{\dagger} \ket{2} = r_1 e^{i\theta_1},\quad \bra{1} U_1 A V_1^{\dagger} \ket{1} = r_2 e^{i\theta_2},
\end{eqnarray*}
where $r_1, \theta_1, r_2, \theta_2 \in \mathbb{R}$.

Let
\begin{eqnarray}
U &=& \left(e^{-i\theta_1}\op{1}{1} + U_2\right) U_1, \notag\\
V &=& \left(e^{i(\theta_1-\theta_2)}\op{1}{1} + V_2\right) V_1.
\end{eqnarray}
It is easy to check that $U,V$ satisfies the statement.
\end{proof}

For a generic $3$-body pure state $\ket{\psi} \in \C^2 \otimes \C^d \otimes \C^d$, the polynomial $\mathrm{det}(A-xB)$ is of degree $d$, where $(A, B) = \varphi(\ket{\psi})$.  By Proposition \ref{triangular}, simultaneous upper triangulation of $A$ and $B$ can always be realized and thus Lemma \ref{CanonicalForm} is proved.

\section{Regular Triangular Vector} \label{sec4}

For every state vector $\ket{\psi} \in \C^2 \otimes \C^d \otimes \C^d$, its triangular form $\ket{\phi}$ lies in the manifold
\begin{eqnarray*}
&\ &\T_d \equiv \bigg\{ \sum_{i=1}^2 \sum_{j=1}^d \sum_{k=1}^d \phi_{ijk} \ket{i^A} \ket{j^B} \ket{k^C} \bigg|\\
&\ &\phi_{ijk} = 0\ \text{when}\ j>k, \phi_{1jk} \in \mathbb{R}\ \text{when}\ k\le j+1\bigg\}.
\end{eqnarray*}
%\C^2 \otimes \mathrm{span}\left( \left\{ \ket{i}_2\ket{j}_3 : 1 \le i \le j \le d \right\} \right).$$
Every vector $\ket{\psi} \in \T_d$ can be expanded in computational basis as
\begin{equation}
\ket{\psi} = \sum_{i=1}^d \sum_{j=i}^d \ket{\psi_{ij}^A}\ket{i^B}\ket{j^C},
\end{equation}
where $\ket{\psi_{ij}} \in \C^2$ and do not have to be orthogonal to each other. Now we consider a subset of $\T_d$, whose complement is measure zero set:
\begin{definition}
We call $\ket{\psi}$ a \textbf{regular triangular vector} if $\ket{\psi}$ satisfies the following condition
\begin{enumerate}
\item $\forall\, 1 \le i \le j \le d$, $\ket{\psi_{ij}^A} \ne 0$,
\item $\forall\, 1 \le i \le j \le d$, $1 \le k \le d$, $\ket{\psi_{ij}^A} \in \mathrm{span}\left( \left\{\ket{\psi_{kk}^A} \right\}\right)$ implies  $i = j = k$.
\end{enumerate}
\end{definition}

\begin{lemma}\label{genregtri}
A generic state vector in $\T_d$ is regular triangular.
\end{lemma}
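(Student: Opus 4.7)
The plan is to realize $\T_d$ as a real affine subspace and exhibit the non-regular locus as a finite union of proper real-algebraic subvarieties, each of Lebesgue measure zero.

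First I would note that the defining relations of $\T_d$, namely $\phi_{ijk}=0$ for $j>k$ together with $\Im(\phi_{1jk})=0$ for $k\le j+1$, are all real-linear. Hence $\T_d$ is a real vector space, and each coefficient $\phi_{aij}$ of $\ket{\psi_{ij}^A}=\phi_{1ij}\ket{1^A}+\phi_{2ij}\ket{2^A}$ is a (complex) linear, hence real-polynomial, function on $\T_d$. Regularity then fails exactly when at least one equation in the following finite list holds: for some $i\le j$ the pair $(\phi_{1ij},\phi_{2ij})$ vanishes (failure of condition (1)); or, for some triple $(i,j,k)$ with $i\le j$ and $(i,j)\ne (k,k)$, the $2\times 2$ determinant $\phi_{1ij}\phi_{2kk}-\phi_{2ij}\phi_{1kk}$ vanishes (failure of condition (2)). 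Each such polynomial equation in the $\phi$'s splits into two real-polynomial equations after separating real and imaginary parts, so it cuts out a real-algebraic subset of $\T_d$.

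It therefore suffices to produce a single vector in $\T_d$ at which every equation on this finite list fails, since such a witness forces each subvariety to be proper and hence of Lebesgue measure zero, and a finite union of measure-zero sets is measure zero. Concretely I would take $\phi_{1ij}=1$ and $\phi_{2ij}=c_{ij}$ for pairwise distinct complex numbers $c_{ij}$, one for each pair $(i,j)$ with $i\le j$. This point lies in $\T_d$ because $\phi_{1ij}=1$ is real, so the reality constraints are automatic. Each $\ket{\psi_{ij}^A}=\ket{1^A}+c_{ij}\ket{2^A}$ is nonzero, and $\ket{\psi_{ij}^A}$ can be proportional to $\ket{\psi_{kk}^A}$ only if $c_{ij}=c_{kk}$, which by distinctness forces $(i,j)=(k,k)$.

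The only subtlety I foresee is the bookkeeping between the complex coefficients $\phi_{aij}$ and a genuine set of real coordinates on $\T_d$, together with the verification that the candidate witness respects every linear constraint defining $\T_d$. These are routine checks rather than essential obstacles, and I expect the argument to go through without further ideas.
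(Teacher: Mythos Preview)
Your proposal is correct and follows essentially the same route as the paper: decompose the non-regular locus into the finite union of sets $\eta_{ij}=\{\ket{\psi_{ij}^A}=0\}$ and $\xi_{ijk}=\{\ket{\psi_{ij}^A}\in\mathrm{span}(\ket{\psi_{kk}^A})\}$ and argue each has measure zero. The paper simply asserts this is ``obvious,'' whereas you supply the extra rigor of identifying each set as a proper real-algebraic subvariety and exhibiting an explicit regular witness; this is a welcome addition but not a genuinely different strategy.
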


\begin{proof}
For every $1 \le i\le j \le d$ and $k \in [d]$, define
$$\eta_{ij} \equiv \left\{ \ket{\psi} \in \T_d: \ket{\psi_{ij}^A} = 0 \right\},$$
and
$$\xi_{ijk} \equiv \left\{\begin{array}{lr}
\emptyset & i = j = k\\
\left\{\ket{\psi} \in \T_d: \ket{\psi_{ij}^A} \in \mathrm{span}\left(\{\ket{\psi_{kk}^A} \}\right) \right\}  & \textrm{otherwise}
\end{array} \right.
$$
Obviously, $\eta_{ij}$ and $\xi_{ijk}$ are all measure-zero sets, and hence the set
\begin{eqnarray*}
\Delta &\equiv& \left\{ \ket{\psi}: \ket{\psi}\textrm{ is not regular triangular} \right\}\\
&=& \left( \bigcup_{1 \le i\le j \le d}
\eta_{ij} \right) \bigcup \left( \bigcup_{1\le i\le j, k} \xi_{ijk}\right)
\end{eqnarray*}
also has measure zero.
\end{proof}

%\textcolor{red}{@Shilin, please add some comment about almost all upper triangular state is the regular triangular state.}
\subsection{Regular Triangular Vector is UDA}
Let $$\ket{\psi} = \sum_{i=1}^d \sum_{j=1}^d \ket{\psi_{ij}^A} \ket{i^B}\ket{j^C}$$ be a vector in $\T_d$. Consider the projective measurement $\left\{P_1, P_2\right\}$ on the third subsystem, where
\begin{eqnarray*}
P_1 = \op{d^C}{d^C},\ P_2 = \I_C - \op{d^C}{d^C}.
\end{eqnarray*}
The vector
$$P_2 \ket{\psi} = \sum_{i=1}^{d-1} \sum_{j=i}^{d-1} \ket{\psi_{ij}^A}\ket{i^B} \ket{j^C}$$
lies in the manifold $\T_{d-1}$.
If $\ket{\psi}$ is a regular triangular state in $\T_d$, $P_2\ket{\psi}$ is also a regular triangular vector.
Moreover, the following theorem holds:
\begin{theorem}\label{reduce}
Let $\ket{\psi} \in \T_d$ be a regular triangular vector. For every positive semidefinite operator $\rho \simeq \op{\psi}{\psi}$, we have
\begin{enumerate}
\item $P_1 \rho P_1 = P_1\op{\psi}{\psi}P_1$,
\item $P_2 \rho P_2 \simeq P_2\op{\psi}{\psi}P_2$.
\end{enumerate}
\end{theorem}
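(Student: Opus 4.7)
I would first reduce the theorem to claim (i) alone, since (ii) is essentially free: the identity $\Tr_B(P_2\rho P_2) = (\I_A\otimes P_2)\Tr_B(\rho)(\I_A\otimes P_2) = \Tr_B(P_2\op{\psi}{\psi}P_2)$ is immediate from $\Tr_B(\rho) = \Tr_B(\op{\psi}{\psi})$, while $\Tr_C(P_2\rho P_2) = \Tr_C(\rho) - \bra{d^C}\rho\ket{d^C}$ coincides with $\Tr_C(P_2\op{\psi}{\psi}P_2)$ as soon as $\bra{d^C}\rho\ket{d^C} = \op{\tilde\psi_d}{\tilde\psi_d}$, where $\ket{\tilde\psi_d} = \sum_{i\le d}\ket{\psi_{id}^A}\ket{i^B}$. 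So the whole task becomes proving this single identity on $\H_A\otimes\H_B$.

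My first concrete step is to pin down the B-diagonal slice $\bra{d^B}\rho\ket{d^B}$ as an operator on $\H_A\otimes\H_C$. Its $\H_A$-marginal $\Tr_C\bra{d^B}\rho\ket{d^B} = \bra{d^B}\Tr_C(\op{\psi}{\psi})\ket{d^B}$ receives a contribution only from $\op{\tilde\psi_d}{\tilde\psi_d}$ in the expansion $\Tr_C(\op{\psi}{\psi}) = \sum_l \op{\tilde\psi_l}{\tilde\psi_l}$ (since $\bra{d^B}\tilde\psi_l\rangle$ vanishes for $l<d$), and equals the rank-one operator $\op{\psi_{dd}^A}{\psi_{dd}^A}$. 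A positive semidefinite operator with rank-one partial trace on one factor necessarily has product form, so $\bra{d^B}\rho\ket{d^B} = \op{\psi_{dd}^A}{\psi_{dd}^A}\otimes\tau_C$ for some $\tau_C\succeq 0$ with $\Tr\tau_C = 1$. Since $\bra{d^B}\rho\ket{d^B}$ is dominated by $\Tr_B(\rho) = \Tr_B(\op{\psi}{\psi})$, the support of the product sits inside $\mathrm{span}\{\ket{\Psi_i}:1\le i\le d\}$ with $\ket{\Psi_i} = \sum_{j\ge i}\ket{\psi_{ij}^A}\ket{j^C}$. Expanding any eigenvector $\ket{\psi_{dd}^A}\ket{\gamma}$ of the product as $\sum_i a_i\ket{\Psi_i}$ and matching $\ket{j^C}$-coefficients for $j = 1,\dots,d-1$ in order, the regularity condition $\ket{\psi_{jj}^A}\notin\mathrm{span}(\ket{\psi_{dd}^A})$ (valid for $j<d$) forces $a_j = 0$ and $\ip{j^C}{\gamma} = 0$; hence $\ket{\gamma}\in\mathrm{span}(\ket{d^C})$, $\tau_C = \op{d^C}{d^C}$, and $\bra{d^B}\rho\ket{d^B} = \op{\psi_{dd}^A,d^C}{\psi_{dd}^A,d^C}$.

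Because $\rho\succeq 0$ and that slice has rank one, $\bra{v,d^B}\rho\ket{v,d^B} = 0$ for every $\ket{v}\in\H_A\otimes\H_C$ orthogonal to $\ket{\psi_{dd}^A,d^C}$, forcing $\rho\ket{v,d^B} = 0$ for all such $\ket{v}$. Hence each off-diagonal slice $\bra{j^B}\rho\ket{d^B}$ has rank at most one and factors as $\ket{\lambda_j}\bra{\psi_{dd}^A,d^C}$ for some $\ket{\lambda_j}\in\H_A\otimes\H_C$. Taking the partial trace over $C$ and using $\Tr_C(\rho) = \sum_l\op{\tilde\psi_l}{\tilde\psi_l}$, only the $l=d$ term survives the $\bra{j^B}\cdot\ket{d^B}$ sandwich, which pins $\ip{d^C}{\lambda_j} = \ket{\psi_{jd}^A}$ and hence the $(j,d)$ B-block of $\bra{d^C}\rho\ket{d^C}$ as $\ket{\psi_{jd}^A}\bra{\psi_{dd}^A}$ for every $j\le d$.

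The final step lifts this to the whole $d\times d$ B-block matrix of $\bra{d^C}\rho\ket{d^C}$. Writing $X_{jk} = \bra{j^B,d^C}\rho\ket{k^B,d^C}\in\L(\H_A)$, the previous paragraph gives $X_{dd} = \op{\psi_{dd}^A}{\psi_{dd}^A}$ and $X_{jd} = \ket{\psi_{jd}^A}\bra{\psi_{dd}^A}$. The Schur complement applied to the $\{j,d\}$ principal submatrix of this positive semidefinite block matrix yields $X_{jj}\succeq \op{\psi_{jd}^A}{\psi_{jd}^A}$; summing over $j$ and using $\sum_j X_{jj} = \Tr_B\bra{d^C}\rho\ket{d^C} = \bra{d^C}\Tr_B(\op{\psi}{\psi})\ket{d^C} = \sum_i\op{\psi_{id}^A}{\psi_{id}^A}$ tightens this to equality. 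With every diagonal B-block now rank one, positive semidefiniteness forces $X_{jk} = c_{jk}\ket{\psi_{jd}^A}\bra{\psi_{kd}^A}$ with a positive semidefinite coefficient matrix $C = (c_{jk})$, and the already established $c_{jd} = c_{jj} = 1$ together with the nonnegativity of the $3\times 3$ principal minors of $C$ forces $c_{jk} = 1$ throughout, giving $\bra{d^C}\rho\ket{d^C} = \op{\tilde\psi_d}{\tilde\psi_d}$. The main obstacle is this final lift: extending the rank-one structure from the $B = d$ boundary row and column to the entire block matrix requires the combined use of Schur complements, the trace identity $\Tr_B\bra{d^C}\rho\ket{d^C} = \sum_i\op{\psi_{id}^A}{\psi_{id}^A}$, and (in the preceding factoring step) the regularity of the $\ket{\psi_{ij}^A}$.
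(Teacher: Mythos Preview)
Your proof is correct and follows the same overall architecture as the paper's: reduce (ii) to (i), then establish (i) by first proving the analogue of the paper's Proposition~\ref{prop1}, namely $\bra{d^B}\rho\ket{d^B}=\op{\psi_{dd}^A,d^C}{\psi_{dd}^A,d^C}$, and then bootstrapping from this to $\bra{d^C}\rho\ket{d^C}=\op{\tilde\psi_d}{\tilde\psi_d}$. The tactics in each of the two steps differ, however. For the first step, the paper peels off the vectors $\ket{\zeta_k}$ one by one via an inductive application of Observation~\ref{Propo1}, whereas you argue more directly: the rank-one $\H_A$-marginal forces product form $\op{\psi_{dd}^A}{\psi_{dd}^A}\otimes\tau_C$, and then the support constraint coming from $\bra{d^B}\rho\ket{d^B}\le\Tr_B(\rho)=\sum_i\op{\Psi_i}{\Psi_i}$ together with the triangular structure and regularity pins $\tau_C=\op{d^C}{d^C}$. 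For the second step, the paper applies Observation~\ref{Propo1} (Case~A) once to $\bra{d^C}\rho\ket{d^C}$ with the slice $\ket{n}=\ket{d^B}$, obtaining the rank-one lower bound $\bra{d^C}\rho\ket{d^C}\ge\ip{d^C}{\psi}\ip{\psi}{d^C}$ directly, and then the $\Tr_B$ equality forces equality; you instead work block by block, using Schur complements and the trace identity to saturate the diagonal blocks, and then a $3\times 3$ principal-minor argument on the scalar matrix $(c_{jk})$ to fix the off-diagonals. Your route is a bit longer in the second step but avoids packaging the Cauchy--Schwarz/Schur argument into a separate observation; the paper's is slicker there but needs the explicit induction in the first step. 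Both approaches use the regularity hypothesis at exactly the same point (forcing $\ket{\psi_{jj}^A}\notin\mathrm{span}\,\ket{\psi_{dd}^A}$ for $j<d$).
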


Note that we have the following corollary of Theorem \ref{reduce}:
\begin{corollary}\label{lemma2}
All regular triangular vectors in $\T_d$ are UDA.
\end{corollary}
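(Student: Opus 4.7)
I would prove Corollary~\ref{lemma2} by induction on $d$, with Theorem~\ref{reduce} doing most of the work. The base $d=1$ is immediate: there $P_2=0$ and $P_1=\I$, so Theorem~\ref{reduce} directly gives $\rho=P_1\rho P_1=P_1\op{\psi}{\psi}P_1=\op{\psi}{\psi}$. For the inductive step, let $\ket{\psi}\in\T_d$ be regular triangular with $\rho\simeq\op{\psi}{\psi}$. Theorem~\ref{reduce} yields $P_1\rho P_1=P_1\op{\psi}{\psi}P_1$ and $P_2\rho P_2\simeq P_2\op{\psi}{\psi}P_2$. The excerpt already notes that $P_2\ket{\psi}$ is regular triangular in $\T_{d-1}$; moreover, the matching $\Tr_C$-reduction combined with positivity of $P_2\rho P_2$ forces its support to have $B$-indices at most $d-1$, so it is a legitimate PSD operator on $\C^2\otimes\C^{d-1}\otimes\C^{d-1}$. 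The inductive hypothesis then yields $P_2\rho P_2=P_2\op{\psi}{\psi}P_2$.

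With both diagonal blocks of $\rho$ relative to the splitting $P_1+P_2=\I$ pinned down and each rank one, I would argue that $\rho$ is supported in the two-dimensional span of $\ket{a}:=P_1\ket{\psi}$ and $\ket{b}:=P_2\ket{\psi}$ (orthogonal, since they live on disjoint $C$-slices). Concretely, in any spectral decomposition $\rho=\sum_k p_k\op{v_k}{v_k}$ the rank-one structure of $P_i\rho P_i$ forces every $P_i\ket{v_k}$ to be proportional to $P_i\ket{\psi}$, hence $\ket{v_k}\in\mathrm{span}\{\ket{a},\ket{b}\}$. In the normalized basis $\{\ket{a}/\norm{a},\ket{b}/\norm{b}\}$, $\rho$ reduces to a $2\times 2$ PSD matrix with diagonal entries $\norm{a}^2,\norm{b}^2$, and the only remaining freedom is the off-diagonal entry $\rho_{ab}$, constrained by $\abs{\rho_{ab}}\le\norm{a}\norm{b}$ with equality realized exactly at $\rho=\op{\psi}{\psi}$.

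The main obstacle, and the final step, is to extract this saturation from the $\Tr_B$-matching. Writing out $\Tr_B\rho=\Tr_B\op{\psi}{\psi}$ in the $\{\ket{a},\ket{b}\}$ basis and cancelling the diagonal contributions leaves $(\rho_{ab}-\norm{a}\norm{b})\,T+\mathrm{h.c.}=0$, where $T:=\Tr_B\op{a}{b}$. A direct computation shows that $T$ has matrix support only on entries $\op{d^C}{j^C}$ with $j<d$, while $T^{\dagger}$ has the transposed (hence disjoint) pattern, so any single nonzero entry of $T$ alone forces $\rho_{ab}=\norm{a}\norm{b}$. The regular triangular condition enters precisely here: the coefficient of $\op{d^C}{1^C}$ in $T$ works out to $\op{\psi_{1d}^A}{\psi_{11}^A}$, which is nonzero because regularity forbids any $\ket{\psi_{ij}^A}$ from vanishing. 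This closes the induction and establishes the corollary.
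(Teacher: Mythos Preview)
Your proposal is correct and follows essentially the same approach as the paper: induction on $d$, Theorem~\ref{reduce} to match the diagonal blocks, support reduction to $\mathrm{span}\{P_1\ket{\psi},P_2\ket{\psi}\}$, and then reading off the $(1^C,d^C)$ block of $\Tr_B$ to pin the off-diagonal coefficient, using that $\op{\psi_{1d}^A}{\psi_{11}^A}\ne 0$ by regularity. The only differences are cosmetic: the paper parametrizes the off-diagonal by a scalar $\alpha$ multiplying $P_1\op{\psi}{\psi}P_2$ and computes the single matrix element $\bra{1^C}\Tr_B(\rho)\ket{d^C}$ directly, whereas you package the same computation via $T=\Tr_B\op{a}{b}$ and its disjoint-support structure.
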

\begin{proof}

The statement is trivial when $d = 1$.
Suppose that all regular triangular vectors in $\T_{d-1}$ are UDA.
Consider an arbitrary regular triangular vector $\ket{\psi}$ in $\T_d$.
For every positive semidefinite operator $\rho \simeq \op{\psi}{\psi}$, by Proposition \ref{reduce}, we have
\begin{equation}
P_1\rho P_1 = P_1\op{\psi}{\psi} P_1, \quad P_2 \rho P_2 \simeq P_2\op{\psi}{\psi} P_2.
\end{equation}
By our assumption that all regular triangular vectors in $\T_{d-1}$ are UDA, we have $P_2 \rho P_2 = P_2 \op{\psi}{\psi} P_2$
since $P_2\ket{\psi}$ is a regular triangular vector.
Therefore we can view $\rho$ as an unnormalized mixed state in the subspace spanned by $\left\{ P_1\ket{\psi}, P_2\ket{\psi} \right\}$, i.e.,
we can expand $\rho$ as
\begin{eqnarray}
\rho &=&P_1\op{\psi}{\psi}P_1 + P_2\op{\psi}{\psi}P_2\nonumber\\
     &+& \alpha P_1\op{\psi}{\psi}P_2 + \alpha^\ast P_2 \op{\psi}{\psi} P_1. \label{expansion}
\end{eqnarray}

Tracing out the second particle and doing further calculation on (\ref{expansion}), we have
\begin{eqnarray}
\bra{1^C} \Tr_B(\rho) \ket{d^C}
= \alpha^\ast\  \Tr_B\left(\ip{1^C}{\psi}\ip{\psi}{d^C}\right). \label{eqn35}
\end{eqnarray}
On the other hand, by original assumption $\Tr_B(\rho)=\Tr_B(\op{\psi}{\psi})$, we have
\begin{eqnarray}
\Tr_B(\bra{1^C}\rho\ket{1^C}) = \Tr_B\left(\ip{1^C}{\psi}\ip{\psi}{d^C}\right) \ne 0. \label{eqn36}
\end{eqnarray}
Comparing (\ref{eqn35}) and (\ref{eqn36}),
we must have $\alpha = 1$. Therefore $\rho = \op{\psi}{\psi}$, i.e.,  $\ket{\psi}$ is UDA.
\end{proof}

\subsection{Proof of Theorem \ref{reduce}}
Before proving Theorem \ref{reduce}, we introduce a useful observation at first:

\begin{observation}\label{Propo1}
Let $\H_A$ and $\H_B$ be two finite dimensional Hilbert spaces,
$H \in \L(\H_A \otimes \H_B)$ be a positive semi-definite operator, and $\ket{n} \in \H_A$ be a normalized vector.
\begin{itemize}
\item[Case A.] If $\bra{n} H \ket{n} = \op{r}{r}$ for some non-zero vector $\ket{r} \in \H_B$, then there exists a unique state $\ket{\varphi} \in \H_A \otimes \H_B$ such that $H \ket{n} = \op{\varphi}{r}$ and $H - \op{\varphi}{\varphi} \ge 0$.
\item[Case B.] If $\bra{n} H \ket{n} = 0$, then $H \ket{n} = 0$.
\end{itemize}
\end{observation}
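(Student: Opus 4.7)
The plan is to extract a rank-one decomposition of $H$ and then use the rank of the block $\bra{n}H\ket{n}$ to constrain how each summand interacts with the vector $\ket{n}$. Since $H\ge 0$, write $H=\sum_i \op{\Psi_i}{\Psi_i}$ with $\ket{\Psi_i}\in\H_A\otimes\H_B$ (a spectral decomposition will do), and set $\ket{r_i}:=(\bra{n}\otimes\I_B)\ket{\Psi_i}\in\H_B$. A direct computation then yields the two identities
\begin{equation*}
\bra{n}H\ket{n}=\sum_i \op{r_i}{r_i},\qquad H(\ket{n}\otimes\ket{x})=\sum_i \ip{r_i}{x}\ket{\Psi_i}
\end{equation*}
for every $\ket{x}\in\H_B$; these are the engine of both cases.

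Case B is then immediate: if $\sum_i\op{r_i}{r_i}=0$, each $\ket{r_i}$ must vanish, so $H(\ket{n}\otimes\ket{x})=0$ for all $\ket{x}$, which is exactly the statement $H\ket{n}=0$.

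For Case A, the hypothesis $\sum_i\op{r_i}{r_i}=\op{r}{r}$ forces the range of the sum to be the line spanned by $\ket{r}$, so every $\ket{r_i}$ must be a scalar multiple of $\ket{r}$, say $\ket{r_i}=\mu_i\ket{r}$, and comparing both sides gives $\sum_i|\mu_i|^2=1$. I would then define $\ket{\varphi}:=\sum_i \mu_i^*\ket{\Psi_i}$ and check by direct substitution that $H\ket{n}=\op{\varphi}{r}$: the factor $\ip{r_i}{x}=\mu_i^*\ip{r}{x}$ pulls out of the sum to produce $\ip{r}{x}\ket{\varphi}$. Uniqueness of $\ket{\varphi}$ is then automatic, since once $\ket{r}$ is fixed, choosing any $\ket{x}$ with $\ip{r}{x}\ne 0$ in the identity $H(\ket{n}\otimes\ket{x})=\ip{r}{x}\ket{\varphi}$ determines $\ket{\varphi}$ completely.

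The slightly more delicate ingredient is the positivity $H-\op{\varphi}{\varphi}\ge 0$, which I expect to be the main obstacle. My plan is to repackage the decomposition in matrix form: let $M$ be the operator whose columns are the $\ket{\Psi_i}$, so that $H=MM^\dagger$, and let $v$ denote the coefficient column vector with entries $\mu_i^*$, which satisfies $v^\dagger v=\sum_i|\mu_i|^2=1$. Then $\ket{\varphi}=Mv$ and $\op{\varphi}{\varphi}=M(vv^\dagger)M^\dagger$, giving the factorization $H-\op{\varphi}{\varphi}=M(\I-vv^\dagger)M^\dagger$. Since $\I-vv^\dagger$ is an orthogonal projector, the right-hand side is manifestly positive semidefinite. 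The bookkeeping here uses the unit norm of $v$ in a crucial way, and that unit norm is exactly the output of the rank-one hypothesis on $\bra{n}H\ket{n}$; beyond writing the matrix identities cleanly, no further machinery is required.
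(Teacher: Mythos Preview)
Your argument is correct, and it takes a genuinely different route from the paper's. The paper never decomposes $H$: for Case~B it uses a $2\times 2$ principal-minor argument (if $\bra{nb}H\ket{nb}=0$ but some off-diagonal $\bra{m}H\ket{nb}\ne 0$, the corresponding $2\times 2$ block would have negative determinant), and for Case~A it defines $\ket{\varphi}=\frac{1}{\ip{r}{r}}H\ket{nr}$ directly and establishes $H-\op{\varphi}{\varphi}\ge 0$ via the Cauchy--Schwarz inequality for the positive semidefinite form $(u,v)\mapsto\bra{u}H\ket{v}$. Your approach instead fixes a rank-one expansion $H=\sum_i\op{\Psi_i}{\Psi_i}$ and reads everything off the components $\ket{r_i}=(\bra{n}\otimes\I_B)\ket{\Psi_i}$; the positivity then becomes the transparent factorization $M(\I-vv^\dagger)M^\dagger$ with $v^\dagger v=1$. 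The two positivity arguments are of course cousins (Cauchy--Schwarz for a PSD form is exactly the statement that $\I-vv^\dagger\ge 0$ after choosing a square root), but your packaging is arguably cleaner for Case~B and makes the rank-one constraint in Case~A very explicit, at the cost of introducing an auxiliary decomposition that the paper avoids. You also spell out the uniqueness of $\ket{\varphi}$ more carefully than the paper does.
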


\begin{proof}

If $\bra{n} H \ket{n} = \op{r}{r}$ for some non-zero vector $\ket{r} \in \H_B$,
then for every $\ket{s} \in \H_B$ such that $\ip{s}{r} = 0$, we have $\bra{ns} H \ket{ns} = 0$ and thus $H \ket{ns} = 0$.
We have
\begin{eqnarray*}
H \ket{n} &=& H \ket{n} \left[ \frac{1}{\ip{r}{r}} \op{r}{r} + \left(\I_B - \frac{1}{\ip{r}{r}}\op{r}{r} \right) \right]\\
&=& \frac{1}{\ip{r}{r}}H \op{nr}{r}.
\end{eqnarray*}
Define $\ket{\varphi} = \frac{1}{\ip{r}{r}}H \ket{nr}$, then $H \ket{n} = \op{\varphi}{r}$.

For every vector $\ket{\psi} \in \H_A \otimes \H_B$,
we have
\begin{eqnarray}
&\ &\bra{\psi} \left(H - \op{\varphi}{\varphi} \right) \ket{\psi}\nonumber\\
&=& \bra{\psi} H \ket{\psi} - \frac{\left| \bra{\psi} H \ket{nr} \right|^2}{\ip{r}{r}^2}\nonumber\\
&=& \frac{\bra{\psi} H \ket{\psi} \bra{nr} H \ket{nr} - \left| \bra{\psi} H \ket{nr} \right|^2}{\ip{r}{r}^2}\ge 0.
\end{eqnarray}
Note that the last step is given by Cauchy-Schwarz inequality. Therefore $H - \op{\varphi}{\varphi} \ge 0$.
%We left the proof of Lemma \ref{Propo1} in Appendix \ref{ProofPropo1}.

If $\bra{n}H\ket{n} = 0$, then for every normalized state $\ket{b} \in \H_B$ and $\ket{m} \in \H_A \otimes \H_B$ such that $\ip{m}{nb} = 0$, we must have $\bra{m}H\ket{nb} = 0$. Otherwise,
\begin{equation*}
\mathrm{det}\bigg[\left(\op{nb}{nb} + \op{m}{m}\right) H \left(\op{nb}{nb} + \op{m}{m}\right)\bigg]
\end{equation*}
will be negative, which contradicts to the assumption that $H \ge 0$. Thus,
\begin{equation*}
H\ket{nb} = \left(\I_{AB} - \op{nb}{nb}\right) H \ket{nb} + \op{nb}{nb}H\ket{nb} = 0
\end{equation*}
for all $\ket{b}$, i.e., $H\ket{n} = 0$.
\end{proof}

As the first step to prove Theorem \ref{reduce}, we prove the following proposition.
\begin{proposition} \label{prop1}
$\bra{d^B} \rho \ket{d^B} = \ket{\psi_{dd}^A}\bra{\psi_{dd}^A} \otimes \ket{d^C}\bra{d^C}$.
\end{proposition}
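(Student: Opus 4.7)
The plan is to prove Proposition~\ref{prop1} in three steps: (i) compute a rank-one marginal of $\bra{d^B}\rho\ket{d^B}$ from $\Tr_C\rho = \Tr_C\ket\psi\bra\psi$, (ii) use Observation~\ref{Propo1} together with positivity to obtain the product form $\bra{d^B}\rho\ket{d^B} = \ket{\psi_{dd}^A}\bra{\psi_{dd}^A}\otimes\tau_C$, and (iii) pin down $\tau_C = \ket{d^C}\bra{d^C}$ using $\Tr_B\rho = \Tr_B\ket\psi\bra\psi$ together with the regularity of $\ket\psi$.

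For step~(i), the key observation is $\bra{d^B}\ket\psi = \ket{\psi_{dd}^A}\ket{d^C}$: in the triangular expansion $\sum_{i\le j}$ the choice $i = d$ forces $j = d$. Hence $\Tr_C(\bra{d^B}\ket\psi\bra\psi\ket{d^B}) = \ket{\psi_{dd}^A}\bra{\psi_{dd}^A}$, and combining this with $\Tr_C\rho = \Tr_C\ket\psi\bra\psi$ and the partial-trace identity $\bra{d^B}(\Tr_C\rho)\ket{d^B} = \Tr_C(\bra{d^B}\rho\ket{d^B})$ gives $\Tr_C(\bra{d^B}\rho\ket{d^B}) = \ket{\psi_{dd}^A}\bra{\psi_{dd}^A}$, a rank-one operator on $\H_A$. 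For step~(ii), set $H = \bra{d^B}\rho\ket{d^B}\ge 0$. For any $\ket{a}\in\H_A$ with $\ip{\psi_{dd}^A}{a}=0$, one has $\bra{a}(\Tr_C H)\ket{a} = 0$, so $\bra{a}H\ket{a}$ is p.s.d.\ on $\H_C$ with zero trace and hence zero; applying Observation~\ref{Propo1} Case~B to $H$ viewed as an operator on $\H_A\otimes\H_C$ with such $\ket{a}$ in the first factor gives $H\ket{a} = 0$. Therefore $H$ is supported on $\mathrm{span}(\ket{\psi_{dd}^A})\otimes\H_C$, yielding $H = \ket{\psi_{dd}^A}\bra{\psi_{dd}^A}\otimes\tau_C$ for some $\tau_C\ge 0$ with $\Tr\tau_C = 1$.

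For step~(iii), I would evaluate $\Tr_B\rho = \Tr_B\ket\psi\bra\psi$ at the matrix element $\bra{k^C}(\cdot)\ket{k^C}$ for each $k<d$. A direct computation gives $\bra{k^C}(\Tr_B\ket\psi\bra\psi)\ket{k^C} = \sum_{i=1}^k\ket{\psi_{ik}^A}\bra{\psi_{ik}^A}$, and isolating the $j=d$ block via step~(ii) yields the $\C^2$ operator identity
\begin{equation*}
\sum_{j<d}\bra{j^B k^C}\rho\ket{j^B k^C} = \sum_{i=1}^k\ket{\psi_{ik}^A}\bra{\psi_{ik}^A} - \bra{k^C}\tau_C\ket{k^C}\,\ket{\psi_{dd}^A}\bra{\psi_{dd}^A}.
\end{equation*}
In the base case $k=1$, the right-hand side reads $\ket{\psi_{11}^A}\bra{\psi_{11}^A} - c\ket{\psi_{dd}^A}\bra{\psi_{dd}^A}$ with $c = \bra{1^C}\tau_C\ket{1^C}\ge 0$; positivity of the left-hand side together with the regularity fact that $\ket{\psi_{11}^A}$ is not parallel to $\ket{\psi_{dd}^A}$ (a short determinant computation on $\C^2$) forces $c = 0$, and Cauchy-Schwarz on $\tau_C\ge 0$ then gives $\tau_C\ket{1^C} = 0$. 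For higher $k$, I would combine analogous diagonal identities with the off-diagonal identities for $\bra{k^C}(\Tr_B\rho)\ket{k'^C}$ and the vanishings already established at smaller indices (via Cauchy-Schwarz) to iteratively eliminate every $\ket{d^C}^\perp$ entry of $\tau_C$; the normalization $\Tr\tau_C = 1$ then yields $\tau_C = \ket{d^C}\bra{d^C}$.

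The main obstacle is step~(iii) for $k\ge 2$. Unlike the $k=1$ case, the base operator $\sum_{i\le k}\ket{\psi_{ik}^A}\bra{\psi_{ik}^A}$ may be full rank on $\C^2$, so a diagonal positivity argument alone gives only an upper bound on $\bra{k^C}\tau_C\ket{k^C}$. The delicate part is carefully coupling these diagonal constraints with the off-diagonal $\Tr_B$ constraints and the regularity hypothesis, in an inductive pass over $k$, to eliminate every non-$\ket{d^C}$ component of $\tau_C$.
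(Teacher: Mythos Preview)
Your steps (i) and (ii) are correct and actually streamline the paper's argument: establishing the product form $\bra{d^B}\rho\ket{d^B}=\op{\psi_{dd}^A}{\psi_{dd}^A}\otimes\tau_C$ up front is cleaner than the paper's version, which rederives the ``$\beta\,\op{\psi_{dd}^A}{\psi_{dd}^A}$'' constraint at every stage of the induction. Your $k=1$ case in step (iii) is also fine.

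The gap is exactly where you flag it: for $k\ge 2$ you have not supplied the mechanism that makes the induction go through, and the phrase ``combine diagonal with off-diagonal identities and prior vanishings'' does not capture it. The missing idea is to invoke Observation~\ref{Propo1} \emph{Case~A} (not just Case~B) on the positive operator
\[
H \;=\; \Tr_B\rho-\bra{d^B}\rho\ket{d^B}\;=\;\sum_{j<d}\bra{j^B}\rho\ket{j^B}\;\ge\;0,
\]
and peel off rank-one pieces $\op{\zeta_i}{\zeta_i}$ with $\ket{\zeta_i}=\sum_{j\ge i}\ket{\psi_{ij}^A}\ket{j^C}$. Concretely: once you know $\tau_C\ket{1^C}=0$, you get $\bra{1^C}H\ket{1^C}=\op{\psi_{11}^A}{\psi_{11}^A}$ and $H\ket{1^C}=\ket{\zeta_1}\bra{\psi_{11}^A}$, so Case~A gives $H-\op{\zeta_1}{\zeta_1}\ge 0$. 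Writing $H'=H-\sum_{i<k}\op{\zeta_i}{\zeta_i}$, the identity $H'+\op{\psi_{dd}^A}{\psi_{dd}^A}\otimes\tau_C=\sum_{i\ge k}\op{\zeta_i}{\zeta_i}$ yields
\[
\bra{k^C}H'\ket{k^C}+\bra{k^C}\tau_C\ket{k^C}\,\op{\psi_{dd}^A}{\psi_{dd}^A}\;=\;\op{\psi_{kk}^A}{\psi_{kk}^A},
\]
which is rank one. Now the same non-parallel argument you used at $k=1$ forces $\bra{k^C}\tau_C\ket{k^C}=0$, and the induction continues. The point is that the peeling replaces your full-rank obstacle $\sum_{i\le k}\op{\psi_{ik}^A}{\psi_{ik}^A}$ by the single term $\op{\psi_{kk}^A}{\psi_{kk}^A}$; without it the diagonal and off-diagonal $\Tr_B$ identities alone do not determine $\tau_C$. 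This is precisely the inductive engine in the paper's proof.
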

\begin{proof}
Define $\ket{\zeta_i} \equiv \sum_{j=i}^d \ket{\psi_{ij}^A}\ket{j^C}$ and $H = \tr_2(\rho) - \bra{d^B}\rho\ket{d^B}.$
We first prove that
$H \ge \sum_{i=1}^{d-1} \op{\zeta_i}{\zeta_i}$.

Suppose $H \ge \sum_{i=1}^{k-1} \op{\zeta_i}{\zeta_i}$ for some $1 \le k < d$.
Let $H' = H - \sum_{i=1}^{k-1} \op{\zeta_i}{\zeta_i}$. We have $$H' + \bra{d^B}\rho\ket{d^B} = \sum_{i=k}^d \op{\zeta_i}{\zeta_i}.$$
Note that
\begin{equation}
\bra{k^C} \Big( H' + \bra{d^B}\rho\ket{d^B} \Big) \ket{k^C} = \ket{\psi_{kk}^A}\bra{\psi_{kk}^A},
\end{equation}
since $H' \ge 0$, $\bra{d^B} \rho \ket{d^B} \ge 0$, and $\op{\psi_{kk}^A}{\psi_{kk}^A}$ has rank $1$,
there must exist $\alpha \ge 0$ such that
$$\bra{d^Bk^C} \rho \ket{d^Bk^C} = \alpha\ket{\psi_{kk}^A}\bra{\psi_{kk}^A}.$$
%\textcolor{red}{Shilin, I can not follow here, would you please be more explicit?}
However, notice that
\begin{equation}
\sum_{i=1}^d \bra{d^Bi^C} \rho \ket{d^Bi^C} = \ket{\psi_{dd}^A}\bra{\psi_{dd}^A},
\end{equation}
there must exist $\beta \ge 0$ such that $$\bra{d^Bk^C}\rho\ket{d^Bk^C} = \beta \ket{\psi_{dd}^A}\bra{\psi_{dd}^A}.$$
%\textcolor{red}{Shilin, I can not follow here, would you please be more explicit?}
Since $\ket{\psi}$ is regular triangular, for every $k < d$, we have $\ket{\psi_{kk}^A} \notin \mathrm{span}\left( \left\{ \ket{\psi_{dd}^A} \right\} \right)$.
Thus $\alpha = \beta = 0$, i.e., $$\bra{d^Bk^C} \rho \ket{d^Bk^C} = 0.$$
By Observation \ref{Propo1}, $\bra{d^B} \rho \ket{d^Bk^C} = 0$, hence
\begin{equation*}
H'\ket{k^C} = \Big( \sum_{i=k}^d \op{\zeta_i}{\zeta_i} - \bra{d^B}\rho\ket{d^B} \Big) \ket{k^C} = \ket{\zeta_k}\bra{\psi_{kk}^A}.
\end{equation*}
Moreover,
\begin{equation*}
H' - \op{\zeta_k}{\zeta_k} = H - \sum_{i=1}^{k} \op{\zeta_k}{\zeta_k} \ge 0.
\end{equation*}
Since $H \ge \sum_{i=1}^{k-1} \op{\zeta_i}{\zeta_i}$ holds trivially when $k = 1$,
we can prove $H \ge \sum_{i=1}^{d-1} \op{\zeta_i}{\zeta_i}$ by induction.

Up to now, we have
\begin{eqnarray*}
&\ &\bra{d^B} \rho \ket{d^B} = \sum_{i=1}^{d} \op{\zeta_i}{\zeta_i} - H\\
&\le& \op{\zeta_d}{\zeta_d} = \ket{\psi_{dd}^A}\bra{\psi_{dd}^A} \otimes \ket{d^C}\bra{d^C}.
\end{eqnarray*}
Since
\begin{eqnarray*}
&\  & \Tr_C\left(\bra{d^B} \rho \ket{d^B}\right) = \ket{\psi_{dd}^A}\bra{\psi_{dd}^A}\\
& = &\Tr_C\left(\ket{\psi_{dd}^A}\bra{\psi_{dd}^A} \otimes \ket{d^C}\bra{d^C}\right),
\end{eqnarray*}
we must have
\begin{equation}
\bra{d^B} \rho \ket{d^B} = \ket{\psi_{dd}^A}\bra{\psi_{dd}^A} \otimes \ket{d^C}\bra{d^C}.
\end{equation}
\end{proof}

By Proposition \ref{prop1}, we have
$$\bra{d^Bi^C}\rho \ket{d^Bi^C} = \delta_{id}\ket{\psi_{dd}^A}\bra{\psi_{dd}^A}.$$
By Observation \ref{Propo1}, for every $i<d$, we have $\bra{i^C} \rho \ket{d^Bi^C} = 0$.
Therefore
\begin{eqnarray*}
&\ &\bra{d^C} \rho \ket{d^Bd^C}\\
 &=& \left(\Tr_C(\rho) - \sum_{i=1}^{d-1} \bra{i^C} \rho \ket{i^C} \right) \ket{d^B}\\
  &=& \Tr_C(\rho) \ket{d^B}\\
&=&\left( \sum_{i} \ket{\psi_{id}^A}\ket{i^B} \right) \bra{\psi_{dd}^A}\\
 &=& \ip{d^C}{\psi} \bra{\psi_{dd}^A}.
\end{eqnarray*}
Moreover,
\begin{eqnarray*}
\bra{d^C} \rho \ket{d^C} &\ge& \ip{d^C}{\psi}\ip{\psi}{d^C}.
\end{eqnarray*}

Since $\Tr_B\left(\bra{d^C} \rho \ket{d^C}\right) = \Tr_B\left(\ip{d^C}{\psi}\ip{\psi}{d^C}\right)$,
we have $\bra{d^C} \rho \ket{d^C} = \ip{d^C}{\psi}\ip{\psi}{d^C}$.
Recall that $P_1 = \ket{d^C}\bra{d^C}$, we have
\begin{eqnarray*}
P_1 \rho P_1 = \ket{d^C} \ip{d^C}{\psi}\ip{\psi}{d^C}\bra{d^C}
= P_1 \op{\psi}{\psi} P_1.
\end{eqnarray*}
 Up to now, we have finished the proof of the first part of Theorem \ref{reduce}.

Consider the operator $P_2 \rho P_2$. For every $i < d$, we have $\bra{d^Bi^C}\rho \ket{d^Bi^C}= 0$.
By Observation \ref{Propo1}, we have $\rho \ket{d^Bi^C} = 0$. Furthermore, for every $1 \le i,j \le d$, we have
\begin{equation*}
\bra{d^Bi^C} \rho \ket{d^Bj^C} = 0,
\end{equation*}
i.e., $\rho \in \L\left(\H_p \otimes \H_{d-1} \otimes \H_{d-1}\right)$.

Since $P_2$ acts trivially on the second subsystem, we have
\begin{eqnarray}\label{equation21}
%&\ & \Tr_B\left(P_2 \rho P_2\right) = P_2 \Tr_B \left(\rho\right) P_2\nonumber\\
%&=& P_2 \Tr_B \left(\op{\psi}{\psi}\right) P_2 = \Tr_B \left(P_2 \op{\psi}{\psi} P_2\right).
\Tr_B\left(P_2 \rho P_2\right) = \Tr_B \left(P_2 \op{\psi}{\psi} P_2\right).
\end{eqnarray}
On the other hand,
\begin{eqnarray}\label{theorem22}
&\ &\Tr_C(P_2\rho P_2)\nonumber\\
 &=& \Tr_C(\rho) - \bra{d^C} \rho \ket{d^C} \nonumber\\
&=& \Tr_C(\op{\psi}{\psi}) - \ip{d^C}{\psi}\ip{\psi}{d^C} \nonumber\\
 &=& \Tr_C\left( P_2 \op{\psi}{\psi} P_2 \right).
\end{eqnarray}
 Combining (\ref{equation21}) and (\ref{theorem22}), we conclude that $$P_2 \rho P_2 \simeq P_2 \op{\psi}{\psi} P_2$$ and thus the second part of Theorem \ref{reduce} has been proved.

\section{A Generic Pure State in $\C^p \otimes \C^d \otimes \C^d$ is $(12,13)$-UDA} \label{sec5}

\subsection{The case of $p = 2$}

\begin{lemma}
Almost all pure states in $\C^2 \otimes \C^d \otimes \C^d$ are UDA.
\end{lemma}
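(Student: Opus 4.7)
The plan is to assemble the results from Sections \ref{sec2} and \ref{sec4}. Call a state $\ket{\psi} \in \C^2 \otimes \C^d \otimes \C^d$ \emph{good} if it is LU-equivalent to some regular triangular vector. By Corollary \ref{lemma2} combined with Lemma \ref{lma1}, every good state is UDA, so it suffices to show that the set of non-good states has Lebesgue measure zero.

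I would split the non-good set into two pieces. The first consists of states to which Proposition \ref{triangular} does not apply: writing $(A,B) = \varphi(\ket{\psi})$, this is the set where $\det(A - xB)$ has degree strictly less than $d$. The coefficient of $x^d$ in $\det(A - xB)$ is a single polynomial in the coordinates of $\ket{\psi}$, so its vanishing locus is a proper algebraic subvariety of the state space, hence measure zero. Every state outside this locus is LU-equivalent to some $\ket{\phi} \in \T_d$ by Lemma \ref{CanonicalForm}.

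The second piece consists of the triangularizable states whose triangular form lies in the non-regular set $\Delta \subset \T_d$. These states are contained in the image
\[
\Phi\bigl(U(2) \times U(d) \times U(d) \times \Delta\bigr),
\]
where $\Phi(U_A, U_B, U_C, \ket{\phi}) = (U_A \otimes U_B \otimes U_C)\ket{\phi}$. From the proof of Lemma \ref{genregtri}, $\Delta$ is a finite union of real-algebraic subvarieties of $\T_d$ of real codimension at least $2$. The map $\Phi$ is real-analytic; using a local slice transverse to the LU orbits on $\T_d$ (so that $\Phi$ restricted to $U(2) \times U(d) \times U(d) \times (\text{slice})$ is a local diffeomorphism onto an open set of states) and a Fubini argument over the compact factor $U(2) \times U(d) \times U(d)$, the image is contained in a finite union of proper real-analytic subvarieties of the state space, and therefore has Lebesgue measure zero.

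Combining the two pieces, almost every state in $\C^2 \otimes \C^d \otimes \C^d$ is good and hence UDA. The main obstacle is the second piece: one has to transfer the measure-zero statement from $\T_d$ across the LU action, which requires keeping careful track of the dimensions of $\Delta$ and of the generic LU-orbit (so that codimension $\geq 2$ in the source descends to positive codimension in the target). Everything else is a direct assembly of Lemmas \ref{lma1}, \ref{CanonicalForm}, \ref{genregtri}, and Corollary \ref{lemma2}.
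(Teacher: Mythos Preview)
Your overall strategy matches the paper's: split the bad set into (i) states with no triangular form and (ii) states whose triangular form lands in the non-regular set $\Delta$, then show each piece has measure zero. Piece (i) is handled the same way in both.

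For piece (ii) the paper sidesteps exactly the obstacle you flag. Instead of acting by the full group $U(2)\times U(d)\times U(d)$, it uses only $SU(d)\times SU(d)\times U(1)$ on $\T_d$, with the identity on the first tensor factor (recall that Proposition~\ref{triangular} reaches the triangular form using unitaries on $\H_B$ and $\H_C$ alone). The reality conditions built into the definition of $\T_d$ are chosen precisely so that
\[
\dim\bigl(\T_d \times SU(d)\times SU(d)\times U(1)\bigr)=(2d^2+1)+2(d^2-1)+1=4d^2=\dim\bigl(\C^2\otimes\C^d\otimes\C^d\bigr).
\]
Because the source and target have the \emph{same} dimension, the smooth map $M$ automatically sends measure-zero sets to measure-zero sets (the paper cites Lee, Theorem~6.9). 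Thus $M(\Delta\times SU(d)\times SU(d)\times U(1))$ has measure zero directly from Lemma~\ref{genregtri}; no slice construction, no Fubini argument, and no need to know that $\Delta$ has codimension $\ge 2$ rather than merely measure zero.

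Your parametrization $U(2)\times U(d)\times U(d)\times \T_d$ has real dimension $4d^2+5$, so the map $\Phi$ is not equidimensional and a measure-zero subset of the source can in principle have positive-measure image. Your slice-plus-Fubini sketch would have to account for all five excess dimensions coming from the generic LU-stabilizer and from the intersection of an LU-orbit with $\T_d$; this is doable but is precisely the bookkeeping you left open. The paper's equidimensional setup eliminates that bookkeeping entirely.
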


\begin{proof}
Consider the smooth mapping $$M: \T_d \times SU(d) \times SU(d) \times U(1) \rightarrow \C^2 \otimes \C^d \otimes \C^d$$ defined as
\begin{equation}
M\left( \ket{\phi}, U, V, e^{i\theta} \right) = e^{i\theta} \left( \I_2 \otimes U \otimes V \right) \ket{\phi}.
\end{equation}

Define $$\Delta_1 \equiv \C^2 \otimes \C^d \otimes \C^d \Large\setminus M\Big[ \T_d \times SU(d) \times SU(d) \times U(1) \Big]$$
as the set of states that do not have a triangular form, and
\begin{align*}
\Delta_2 \equiv \Delta \times SU(d) \times SU(d) \times U(1).
\end{align*}
Here we recall that $$\Delta = \left\{ \ket{\psi} \in \T_d: \ket{\psi}\textrm{ is not regular triangular} \right\}.$$
By Lemma \ref{lma1}, the set $$\Delta_3 \equiv \left\{ \ket{\psi} \in \C^2 \otimes \C^d \otimes \C^d: \ket{\psi}\ \textrm{is not UDA} \right\}$$
is the subset of $\Delta_1 \cup M(\Delta_2)$.

By Lemma \ref{CanonicalForm}, $\Delta_1$ has measure-zero. Since $\Delta$ has measure-zero in $\T_d$ by Lemma \ref{genregtri}, we evidently conclude that $\Delta_2$ has measure-zero in $\T_d \times SU(d) \times SU(d) \times U(1)$.
The dimension of the manifold $\T_d$ is
\begin{equation}
 \frac{d(d+1)}{2} \times 2 \times 2 - d - (d-1) = 2d^2+1,
\end{equation}
and indeed,
\begin{eqnarray}
&\ &\mathrm{dim}\Big( \T_d \times SU(d) \times SU(d) \times U(1) \Big)\nonumber\\
&=& 2d^2+1 + 2(d^2-1) + 1\nonumber\\
&=& 4d^2 = \mathrm{dim} \left(\C^2 \otimes \C^d \otimes \C^d\right).
\end{eqnarray}
%Now we know that both $\T_d \times SU(d) \times SU(d) \times U(1)$ and $\C^2 \otimes \C^d \otimes \C^d$ are manifolds with dimension $4d^2$.
By Theorem 6.9 in \cite{Lee2003}, the set $M(\Delta_2)$ has measure-zero in $\C^2 \otimes \C^d \otimes \C^d$.
Therefore the set $\Delta_3 \subseteq \Delta_1 \cup M(\Delta_2)$ has measure-zero in $\C^2 \otimes \C^d \otimes \C^d$.

%\textcolor{red}{here is the Sard's theorem, right? we need a reference about it},
\end{proof}
%\textcolor{red}{Shilin, my understanding of the above proof is that you divided the whole space $\C^2 \otimes \C^d \otimes \C^d$ into $3$ parts, }

%\textcolor{red}{first is the states that do not have upper triangular form, which is measure $0$ according to Lemma \ref{CanonicalForm};}

%\textcolor{red}{second is those have upper triangular form and $(12,13)-$UDA, which is so called the regular triangular form you discussed in Theorem \ref{lemma2};}

%\textcolor{red}{the last one is those have upper triangular form but not $(12,13)-$UDA, you call this part $\Delta_2$.}

%\textcolor{red}{above proof claims the second part and the third part together fill covers measure $1$ after the mapping $M$, so I feel there is a gap. One solution might be to claim that in $\T_d$, regular triangular vector are almost everywhere, if this is obvious, please add some comment after giving the definition of regular triangular vector.}

\subsection{General Case}
\begin{theorem}\label{final}
For $p \ge 2$, almost all pure states in $\C^p \otimes \C^d \otimes \C^d$ are UDA.
\end{theorem}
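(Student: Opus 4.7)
The plan is to reduce the general case to the $p = 2$ lemma from the previous subsection by projecting the first subsystem onto every two-dimensional coordinate subspace. For each pair $1 \le i < j \le p$, define
\begin{equation}
\Pi_{ij} \equiv \op{i^A}{i^A} + \op{j^A}{j^A}, \quad Q_{ij} \equiv \Pi_{ij} \otimes \I_B \otimes \I_C,
\end{equation}
and set $\ket{\psi_{ij}} \equiv Q_{ij} \ket{\psi}$, viewed naturally as an element of $\mathrm{Im}(\Pi_{ij}) \otimes \C^d \otimes \C^d \cong \C^2 \otimes \C^d \otimes \C^d$. The key observation driving the argument is that the collection of ``$A$-blocks'' $\bra{i^A} \rho \ket{j^A}$ is recoverable from the family $\{Q_{ij} \rho Q_{ij}\}_{i<j}$, and that collection determines $\rho$ completely.

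Next I would verify that the equivalence relation $\simeq$ descends under each $Q_{ij}$. Since $\Pi_{ij}$ acts only on $\H_A$, it commutes with the partial traces $\Tr_B$ and $\Tr_C$, and a direct computation yields $\Tr_B(Q_{ij} \rho Q_{ij}) = (\Pi_{ij} \otimes \I_C) \Tr_B(\rho) (\Pi_{ij} \otimes \I_C)$ together with the analogous identity for $\Tr_C$. Consequently, if $\rho \simeq \op{\psi}{\psi}$ on $\C^p \otimes \C^d \otimes \C^d$, then $Q_{ij} \rho Q_{ij} \simeq \op{\psi_{ij}}{\psi_{ij}}$ inside $\C^2 \otimes \C^d \otimes \C^d$. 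Applying the $p = 2$ lemma, provided $\ket{\psi_{ij}}$ is (nonzero and) UDA, one concludes $Q_{ij} \rho Q_{ij} = Q_{ij} \op{\psi}{\psi} Q_{ij}$. Ranging over all $\binom{p}{2}$ pairs pins down every off-diagonal block $\bra{i^A} \rho \ket{j^A}$, while each diagonal block is recovered from any pair containing its index, so $\rho = \op{\psi}{\psi}$ and $\ket{\psi}$ is UDA.

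The main obstacle I anticipate is the genericity bookkeeping: one must show that for almost every $\ket{\psi} \in \C^p \otimes \C^d \otimes \C^d$, all $\binom{p}{2}$ projected vectors $\ket{\psi_{ij}}$ simultaneously avoid the exceptional loci appearing in the $p = 2$ lemma. Let $N \subset \C^2 \otimes \C^d \otimes \C^d$ denote the measure-zero non-UDA set from the preceding subsection. The map $Q_{ij}$ is a surjective linear projection whose kernel is $\mathrm{Ker}(\Pi_{ij}) \otimes \C^d \otimes \C^d$, so the preimage $Q_{ij}^{-1}(N)$ splits as $N \oplus \mathrm{Ker}(Q_{ij})$ and is therefore measure-zero in $\C^p \otimes \C^d \otimes \C^d$ by Fubini. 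The finite union $\bigcup_{i<j} Q_{ij}^{-1}(N)$ remains measure-zero, completing the argument. As a side remark, the UDA property is invariant under nonzero scalar rescaling (by linearity of $\simeq$ in $\rho$), so the possibly unnormalized vectors $\ket{\psi_{ij}}$ may be fed into the $p = 2$ result without renormalization.
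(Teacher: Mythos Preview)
Your proposal is correct and follows essentially the same route as the paper: project onto each two-dimensional coordinate subspace of $\H_A$ via $\Pi_{ij}$, observe that the equivalence $\simeq$ descends because $\Pi_{ij}$ commutes with $\Tr_B$ and $\Tr_C$, apply the $p=2$ lemma to each projected state, and reassemble the $A$-blocks. Your measure-zero bookkeeping via Fubini and finite unions is exactly the paper's ``full-rank linear mapping'' argument spelled out in a bit more detail.
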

\begin{proof}
We have proved the case of $p = 2$.
When $p > 2$, %suppose $\ket{\psi}$ is a pure state in $\C^{p} \otimes \C^d \otimes \C^d$.
let
\begin{equation}
P_{ij} = \ket{i^A}\bra{i^A} + \ket{j^A}\bra{j^A}, \quad i \ne j.
\end{equation}
For a positive semidefinite operator $\rho \in \L\left( \C^{p} \otimes \C^d \otimes \C^d \right)$, we have
$$\Tr_B\left(P_{ij}\rho P_{ij}\right) =P_{ij}\Tr_B\left(\rho\right)P_{ij}$$
and
$$\Tr_C\left(P_{ij}\rho P_{ij}\right) =P_{ij}\Tr_C\left(\rho\right)P_{ij}$$ for all $i \ne j$.

Since $P_{ij}$ is a full-rank linear mapping from $\C^p \otimes \C^d \otimes \C^d$ to
$\C^2 \otimes \C^d \otimes \C^d$, the set
\begin{equation*}
\left\{ \ket{\psi} \in \C^p \otimes \C^d \otimes \C^d :\ P_{ij} \ket{\psi}\ \textrm{is\ not\ UDA\ for\ some\ }i,j\right\}
\end{equation*}
has measure-zero.

Let $\ket{\psi} \in \C^p \otimes \C^d \otimes \C^d$ be a generic state vector.
For every positive semidefinite operator $\rho$
such that $\rho \simeq \op{\psi}{\psi}$, we have
\begin{equation}
\Tr_r \left( P_{ij} \rho P_{ij} \right) = P_{ij} \Tr_r\left(\op{\psi}{\psi}\right) P_{ij},\ r = B,C.
\end{equation}
Note that $P_{ij} \ket{\psi}$ is UDA. Thus
$P_{ij}\rho P_{ij} = P_{ij}\op{\psi}{\psi} P_{ij}$.

We have
\begin{eqnarray*}
&\ &\bra{i^A} \rho \ket{j^A} = \bra{i^A} P_{ij}\rho P_{ij} \ket{j^A}\\
&=& \bra{i^A} P_{ij}\op{\psi}{\psi} P_{ij} \ket{j^A} =\ip{i^A}{\psi}\ip{\psi}{j^A},
\end{eqnarray*}
and
\begin{eqnarray*}
&\ &\bra{i^A} \rho \ket{i^A} = \bra{i^A} P_{ij}\rho P_{ij} \ket{i^A}\\
&=& \bra{i^A} P_{ij}\op{\psi}{\psi} P_{ij} \ket{i^A}_1 =\ip{i^A}{\psi}\ip{\psi}{i^A}
\end{eqnarray*}
for all $i \ne j$. Thus, $\rho = \op{\psi}{\psi}$.
\end{proof}

\begin{corollary}\label{corollary}
Let $\H \equiv \H_1 \otimes \ldots \otimes \H_N$ be an $N$-qudit system, where $\H_i \cong \C^d$. A generic pure state in $\H$
can be uniquely determined by only $2$ of its $\lceil \frac{N+1}{2} \rceil$-body RDMs.
\end{corollary}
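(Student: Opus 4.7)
The plan is to reduce the corollary directly to Theorem \ref{final} by regrouping the $N$ qudits into three blocks $A,B,C$ with $\dim\H_B = \dim\H_C$, so that the $AB$- and $AC$-reductions become precisely the two $\lceil (N+1)/2 \rceil$-body RDMs.

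First, I split into cases by the parity of $N$. When $N = 2k+1$ is odd, take $A$ to be a single qudit and $B$, $C$ each to be $k$ qudits; then $|AB| = |AC| = k+1 = \lceil (N+1)/2 \rceil$, and $\H$ is identified, via a permutation of tensor factors, with $\C^p \otimes \C^{d'} \otimes \C^{d'}$ where $p = d$ and $d' = d^k$. When $N = 2k$ is even, take $|A| = 2$ and $|B| = |C| = k-1$; then again $|AB| = |AC| = k+1 = \lceil (N+1)/2 \rceil$, and $\H$ is identified with $\C^p \otimes \C^{d'} \otimes \C^{d'}$ with $p = d^2$ and $d' = d^{k-1}$. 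In both cases $p \ge 2$, as required by Theorem \ref{final}.

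Second, I invoke Theorem \ref{final} in the regrouped Hilbert space: the set of pure states in $\C^p \otimes \C^{d'} \otimes \C^{d'}$ that are not UDA via their two $\C^p \otimes \C^{d'}$-marginals has measure zero. Because the identification of $\H_1 \otimes \ldots \otimes \H_N$ with $\C^p \otimes \C^{d'} \otimes \C^{d'}$ is a unitary, hence measure-preserving, isomorphism of complex vector spaces, the pullback of this bad set to $\H$ is also measure zero. Every $\ket{\psi}$ outside this pullback is then uniquely determined among all positive semidefinite operators by its $AB$- and $AC$-reductions, each of which is a $\lceil (N+1)/2 \rceil$-body RDM of the original $N$-qudit state, which is exactly the claim.

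I do not anticipate any substantive obstacle here; the argument is a bookkeeping reduction. The only points worth verifying explicitly are (i) the arithmetic $|AB| = |AC| = \lceil (N+1)/2 \rceil$ in both parities, (ii) that $p \ge 2$ in both cases (automatic from $d \ge 2$), and (iii) the trivial small-$N$ edge cases ($N \le 2$), for which the statement is either vacuous or the RDM of the whole system is the state itself. The slight asymmetry between the odd and even constructions, in particular having to absorb two qudits into $A$ when $N$ is even to preserve $\dim \H_B = \dim \H_C$, is the only subtlety worth flagging for the reader.
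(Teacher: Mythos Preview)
Your proposal is correct and follows essentially the same approach as the paper: the paper also sets $m = \lfloor (N-1)/2 \rfloor$, takes $\H_B$ and $\H_C$ to be blocks of $m$ qudits each and $\H_A$ the remaining $N-2m$ qudits (so $|A|=1$ for odd $N$ and $|A|=2$ for even $N$), and then applies Theorem~\ref{final} directly. Your explicit parity split, the remark that the regrouping is a measure-preserving unitary identification, and the check that $p\ge 2$ are all fine and match the paper's implicit reasoning.
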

\begin{proof}
Take $m = \lfloor \frac{N-1}{2} \rfloor$, and split the whole Hilbert space into $3$ parts,
\begin{eqnarray*}
\H_A &=& \H_1 \otimes \ldots \otimes \H_{N-2m}\\%\bigotimes\limits_{i=1}^{N-2m} \H_i,\\
\H_B &=& \H_{N-2m+1} \otimes \ldots \otimes \H_{N-m},\\
\H_C &=& \H_{N-m+1} \otimes \ldots \otimes \H_{N}.
\end{eqnarray*}

Then $\H$ can be considered as a three-body system $\H_A \otimes \H_B \otimes \H_C$, with $\mathrm{dim}\left(\H_B\right) = \mathrm{dim}\left(\H_C\right)$. By theorem \ref{final}, a generic pure state $\ket{\psi}$ in $\H$ can be uniquely determined by
$\Tr_B\left(\op{\psi}{\psi}\right)$ and
$\Tr_C\left(\op{\psi}{\psi}\right)$, that are $2$ of its $\lceil \frac{N+1}{2} \rceil$-body RDMs.
\end{proof}

We have proved that a generic $N$-qudit state is UDA by $2$ of its $\lceil \frac{N+1}{2} \rceil$-body RDMs. Indeed, if we need to reconstruct a generic $N$-qudit pure state via local measurements, $2 \times d^{2\lceil \frac{N+1}{2} \rceil} = O(d^N)$ local measurements are sufficient.

%\textcolor{red}{I failed to solve this problem, sorry....}

\section*{Acknowledgments}
We thank Jie Zhou for helpful discussions.
Y.~L. acknowledges support from Chinese Ministry of Education under grants No.20173080024. B.~Z. is supported by he Natural Sciences and Engineering Research Council of Canada (NSERC) and Canadian Institute for Advanced Research(CIFAR).
\bibliography{uda}

\end{document}